\documentclass[letterpaper, 10 pt, conference]{ieeeconf}  

\IEEEoverridecommandlockouts 

\usepackage{afterpage}
\usepackage{epsfig} 
\usepackage{times} 
\usepackage{amsmath} 
\usepackage{amssymb}  
\usepackage{amsfonts}
\usepackage{mathptmx} 
\usepackage{tabularx}
\usepackage{epstopdf}
\usepackage{float}
\usepackage{epic,color}
\usepackage{mathrsfs}
\usepackage{dsfont,MnSymbol}
\usepackage{algorithm}
\usepackage{multirow} 
\usepackage[T1]{fontenc}

\newcounter{rmnum}

\newcounter{anum}

\def\IEEEQEDclosed{\mbox{\rule[0pt]{1.3ex}{1.3ex}}}

\def\qed{\ifmmode\IEEEQEDclosed\else{\unskip\nobreak\hfil
		\penalty50\hskip1em\null\nobreak\hfil\IEEEQEDclosed
		\parfillskip=0pt\finalhyphendemerits=0\endgraf}\fi}

\def\qed{\hspace*{\fill}~\IEEEQED\par\endtrivlist\unskip}

\def\Re{\mathbb{R}}

\def\notes#1{\marginpar{\tiny #1}\typeout{Notes!
Notes!
Notes!
}}
\renewcommand{\notes}[1]{\typeout{notes!}}

\def\Re{\field{R}}

\def\E{{\sf E}}

\def\IEEEQEDclosed{\mbox{\rule[0pt]{1.3ex}{1.3ex}}}
\def\qed{\nobreak\hfill\IEEEQEDclosed}

\def\beq{\begin{eqnarray}} 
\def\bc{\begin{center}} 
\def\be{\begin{enumerate}}
\def\bi{\begin{itemize}} 
\def\bs{\begin{small}}
\def\bS{\begin{slide}}
\def\ec{\end{center}} 
\def\ee{\end{enumerate}}
\def\ei{\end{itemize}}
\def\es{\end{small}}
\def\eS{\end{slide}}
\def\eeq{\end{eqnarray}}

\newcommand{\newP}[1]{\medskip\noindent{\bf #1:}}

\newcommand{\ud}{\,\mathrm{d}}

\def\Re{\mathbb{R}}
\def\E{{\sf E}}

\renewcommand{\Re}{\mathbb{R}}

\newcommand{\var}{\text{Var}}

\def\E{{\sf E}}
\def\bS{\mathbb{S}}

\def\tp{{\hbox{\rm\tiny T}}}

\def\opt{{\text{\rm (opt)}}}

\def\Re{\mathbb{R}}
\def\ud{\,\mathrm{d}}

\newtheorem{theorem}{Theorem}

\newtheorem{remark}{Remark}
\newtheorem{proposition}{Proposition}

\newlength{\noteWidth}
\long\def\notes#1{\ifinner
	{\tiny #1}
	\else
	\marginpar{\parbox[t]{\noteWidth}{\raggedright\tiny #1}}
	\fi}

\title{\LARGE \bf A Unified Control Theory Derivation of Discrete-Time Linear Ensemble Kalman Filters}

\author{Jin Won Kim
	\thanks{J. W. Kim is with the Department of Mechanical and System Design Engineering at the Hongik University. (e-mail: jin.won.kim@hongik.ac.kr)}
}

\begin{document}

\maketitle
\thispagestyle{empty}
\pagestyle{empty}

\begin{abstract}
	
The ensemble Kalman filter (EnKF) has become a standard methodology for state estimation in high-dimensional systems, yet its various stochastic and deterministic formulations often appear conceptually disconnected. In this paper, a unified derivation framework for EnKF algorithms are established by leveraging the classical duality between estimation and optimal control, which is the key concept in deriving Kalman filter. By extending the minimum variance estimation problem to cover the second order moment of the ensembles, we demonstrate that seemingly distinct EnKF variants---both with or without perturbed observation---can be systematically classified.

Specifically, the duality based framework reveals that the operational differences among these variety of EnKF algorithms reduce to a specific choice of hyperparameters. Ultimately, this perspective not only covers existing EnKF variants but also provides a systematic foundation for designing novel hybrid filters using control theory approach.

\end{abstract}

\section{Introduction}
\label{sec:intro}

The classical Kalman filter is coeval with the dawn of modern control theory. The fundamental duality between control and estimation has been noted since the early 1960s for both linear~\cite{kalman1960general} and nonlinear~\cite{hermann1977nonlinear} systems. In his celebrated work, Kalman~\cite{kalman1960} derived the filter by exploiting the equivalence between the Gaussian estimation problem and the linear quadratic optimal control problem. This elegant duality has long served as a foundational concept in control theory.    

However, as modern application demands the estimation and control of increasingly high-dimensional, nonlinear dynamical systems, the classical approach becomes less feasible. Specifically, explicitly maintaining and propagating the error covariance matrix becomes computationally intractable due to its quadratic scaling with the state dimension. To address this, the Ensemble Kalman filter (EnKF) was developed by practitioners in geophysical sciences~\cite{calvello2025ensemble}. It was designed to handle models with massive state spaces---often on the order of millions to billions. 
The celebrated work of Evensen~\cite{evensen1994sequential} introduced the paradigm of using Monte-Carlo ensemble of state estimators to approximate this covariance matrix. Since then, the EnKF has become the standard approach for state estimation in high-dimensional, nonlinear dynamical systems, particularly within the domains of numerical weather prediction, oceanography, and geophysics.

In this paper, we are revisiting the development of EnKF and its variants from an optimal control perspective to establish a unified derivation framework in a discrete-time setting. This extends the initial idea from the author's previous work~\cite{kim2018LFPF}, which established a similar duality-based framework for continuous-time linear systems, into the discrete domain where practical EnKF algorithms should operate. A duality formalism for discrete time hidden Markov models with finite state and observation spaces appears in recent preprint~\cite{chang2026dual}.

\subsection{Variants of EnKF}\label{ssec:variants-of-EnKF}

The evolution of the EnKF has yielded several distinct methodologies to address the specific statistical and computational challenges of high-dimensional data assimilation. A brief overview of these primary variants is essential to understand the motivation behind our unified framework. Table~\ref{tb:comparison} summarizes necessary takeaways in this section. An excellent and comprehensive review of these methodologies can be found in~\cite{calvello2025ensemble} and the references therein.


\begin{table*}
	\centering
	\renewcommand{\arraystretch}{2}
	\caption{Summary table of Discrete-time EnKF variants} \label{tb:comparison}
	\small
	\begin{tabular}{p{0.22\textwidth}p{0.35\textwidth}p{0.35\textwidth}}
		{\bf EnKF Variants}& {\bf Analysis Mechanism} & {\bf Features and Benefits} \\
		\hline \hline
		Original Ensemble Kalman Filter (EnKF)~\cite{evensen1994sequential} & Applies full Kalman gain to the innovation with perturbed observations.
		& Injects copy of process and measurement noise. Robust to nonlinearity and non-Gaussianity.
		\\ \hline
		Ensemble Square Root Filter (EnSRF)~\cite{whitaker2002ensemble} & Applies a deterministically reduced gain to the anomalies. & Avoid sampling error due to the perturbed observations. \\ \hline
		Deterministic Ensemble Kal-man Filter (DEnKF)~\cite{sakov2008deterministic} & Fix the reduction rate in EnSRF to half. & Very efficient linear approximation of the EnSRF. \\
		\hline  
		Ensemble Adjustment Kalman Filter (EAKF)~\cite{anderson2001ensemble} & Left-multiplication of the anomaly matrix directly in the state space & Ensures exact covariance matching. High computational cost unless treated properly.
		\\ \hline
		Ensemble Transform Kalman Filter (ETKF)~\cite{bishop2001adaptive} & Right-multiplication of the anomaly matrix within the low-dimensional ensemble subspace.
		& Efficient for smaller number of ensembles compared to the state space dimension. 
		\\ \hline
		\hline
	\end{tabular}
\end{table*}

\newP{The Stochastic EnKF} The original formulation by Evensen~\cite{evensen1994sequential} utilizes Monte-Carlo ensemble of state model is simulated subjected to stochastic noise that represents the process noise. During the analysis step, the algorithm updates each individual ensemble member using classical Kalman gain. The covariance matrix is computed from the ensemble members. To ensure that the updated ensemble possesses the error covariance that is consistent with optimal linear filtering theory, the algorithm injects an independent copy of measurement noise into the innovation step for each ensemble member~\cite{burgers1998analysis}, and thus it referred to as the perturbed observation EnKF~\cite{van2020consistent}. Due to this artificially generated pseudo-random noises this approach introduces sampling errors that can degrade the filter's performance, particularly when the ensemble size is relatively small compared to the state dimension. However, it is noted that the stochastic algorithm is relatively more robust against non-Gaussian models~\cite{lei2010comparison}.

\newP{Ensemble Square Root Filter (EnSRF)} To eliminate the sampling procedure in the stochastic EnKF, a class of deterministic filters known collectively as Ensemble Square Root Filters was developed. However in this paper, we use the term EnSRF to refer to the formulation in~\cite{whitaker2002ensemble}. They noted that if the same unperturbed observations and the standard Kalman gain are used to update all ensemble members, the ensemble will systematically underestimate analysis-error covariances. To solve this, they updated the ensemble mean and the ensemble \emph{anomalies} (perturbations from the mean) separately, and adjusted the Kalman gain such that the covariance update equation exactly match in linear case without the use of random numbers. 

\newP{Deterministic Ensemble Kalman Filter (DEnKF)} This algorithm appears in~\cite{sakov2008deterministic} is formulated upon the recognition that, if the correction term is small, the adjustment required for the EnSRF is close to half. Hence the DEnKF updates the ensemble anomalies using exactly half the Kalman gain without computing a separate gain for the anomalies. It is claimed that the algorithm provides a linear approximation to the EnSRF with numerical efficiency.

\newP{Ensemble Adjustment Kalman Filter (EAKF)} Another deterministic approach is formulated in~\cite{anderson2001ensemble}. EAKF computes a linear operator that is applied to the ensemble estimate in the state space such that the updated ensemble exactly matches the theoretical analysis covariance. However, to compute such a huge operator, EAKF has to rely on sequential processing and localization techniques.

\newP{Ensemble Transform Kalman Filter (ETKF)} Since the linear operator of EAKF acts on the state space, explicit computation of the linear operator may not be feasible. ETKF introduced to solve this issue from an adaptive sampling perspective in~\cite{bishop2001adaptive}. It seeks a transformation matrix that operates on the ensemble subspace. This method is highly computationally efficient because it computes the transformation weights in the much smaller dimension of the ensemble size, rather than the massive state space dimension.

\medskip

While these variants—and many other hybrid approaches—were developed from different algebraic or statistical motivations, they often appear disconnected. In the following sections, we will demonstrate how these seemingly distinct algorithms naturally emerge as specific solutions to an optimal control formulation of the estimation problem.

The remainder of the paper proceeds as follows: In Sec~\ref{sec:math}, the classic duality for linear systems is introduced, and then our definition of the optimization problem for ensemble filtering algorithm is presented. The optimal solution is given in Sec.~\ref{sec:opt}. Sec.~\ref{sec:enkf} includes the connection to variety of EnKF family to this optimal control formulation is presented. All proofs appear in the Appendix.

\section{Mathematical Preliminaries}\label{sec:math}


In this paper, we consider the discrete time linear Gaussian dynamical system described by the following:
\begin{subequations}\label{eq:sys}
	\begin{align}
		X_{t+1} &= AX_t + \xi_t \label{eq:sys-a}\\
		Z_t &= HX_t + \zeta_t \label{eq:sys-b}
	\end{align}
\end{subequations}
where $\{X_t\in\Re^n:t=0,1,\ldots\}$ is the state process and $\{Z_t\in\Re^m:t=0,1,\ldots\}$ is the observation process. The initial condition is drawn from a Gaussian distribution $X_0\sim N(m_0,\Sigma_0)$. $A$ and $H$ are the system matrices of appropriate dimension, possibly time-dependent. The additive noise processes $\{\xi_t\in\Re^n:t=0,1,\ldots\}$ and $\{\zeta_t\in\Re^m:t=0,1,\ldots\}$ are mutually independent i.i.d.~Gaussian random variables with covariance $Q$ and $R$, respectively.

The goal of the estimation problem is to find the optimal estimator of the state process $X_T$ at some time $T$, given the observation up to time $T-1$, denoted by $Z_{0:T-1}:=\{Z_t:t=0,1,\ldots,T-1\}$. Here, ``optimal estimator'' denotes the one that minimizes the mean-squared error, which is essentially the conditional expectation $\E[X_T|Z_{0:T-1}]$.

\medskip

For the linear Gaussian problem~\eqref{eq:sys-a}-\eqref{eq:sys-b}, the posterior distribution is also a Gaussian $N(m_t,\Sigma_t)$ whose mean and covariance are given by the Kalman filter:
\begin{subequations}\label{eq:KF}
	\begin{align}
		m_{t+1} &= A\big(m_t + K_t(Z_t-Hm_t)\big) \label{eq:KF-mean}\\
		\Sigma_{t+1} &= A\Sigma_t A^\tp + Q - A\Sigma_t H^\tp (H\Sigma_t H^\tp + R)^{-1}H\Sigma_t A^\tp \label{eq:KF-var}
	\end{align}
\end{subequations}
where
\begin{equation} \label{eq:kt_def}
	K_t:=\Sigma_t H^\tp(H\Sigma_t H^\tp+R)^{-1}
\end{equation}
is the Kalman gain.

%

\subsection{Classical duality for linear problems}

Classical duality approach in discrete time seeks to construct a linear estimator that optimally estimates an arbitrary linear functional of the final state, denoted as $a^\tp X_T$, given the observation signal up to time $T-1$. Introducing the scalar projection $a^\tp$ for an arbitrary $a\in\Re^n$ is for notational simplicity. Since $a$ is arbitrary, one can later drop $a$ from the final linear expressions, yielding the exact same result as a direct vector(matrix)-based formulations.

Consider a causal estimator with the linear structure given as follows:
\begin{equation}\label{eq:estimator}
	S_T = b^\tp m_0 - \sum_{t=0}^{T-1} u_t^\tp Z_t
\end{equation}
It is parameterized with a deterministic vector $b\in\Re^n$ and the weight process for each incoming observation signal $u:=\{u_t\in\Re^m:t=0,1,\ldots,T-1\}$. 

The optimal parameters are obtained such that the following mean-squared error problem:
\begin{equation}\label{eq:opt-classic}
	\min_{b,u}\;\E\big[|a^\tp X_T-S_T|^2\big]
\end{equation}
subjected to the system equation~\eqref{eq:sys-a}-\eqref{eq:sys-b} and the estimator $S_T$ is also a random variable (due to $Z_t$) defined by~\eqref{eq:estimator}.

The duality-based derivation of Kalman filter utilizes the backward-in-time control process to convert the optimization problem~\eqref{eq:opt-classic} into a deterministic linear quadratic problem. Detailed computation appears in the Appendix~\ref{apdx:classic}. 

\subsection{Ensemble of estimators}

In order to elevate classic framework from a singular state estimator to an ensemble of particles, it is necessary to introduce independent statistical copies of the fundamental sources of randomness. Clearly, there are three of them: the initial condition $X_0$, the process noise $\xi_t$ and the measurement noise $\zeta_t$. Let $\tilde{X}_0\in\Re^n$ be an independent copy of the initial condition, and let $\tilde{\xi}_t\in\Re^n$ and $\tilde{\zeta}_t\in\Re^m$ be independent copies of the process and measurement noise, respectively. The linear estimator is subsequently expanded into a parameterized equation that incorporates these additional sources of randomness, scaled by some parameter vectors to be determined. The augmented estimator is defined by:
\begin{equation}\label{eq:estimator-aug}
	\tilde{S}_T = b^\tp m_0 - \sum_{t=0}^{T-1}u_t^\tp Z_t + c^\tp(\tilde{X}_0-m_0) + \sum_{t=0}^{T-1}v_t^\tp\tilde{\xi}_t + \sum_{t=0}^{T-1} w_t^\tp \tilde{\zeta}_t
\end{equation}
where $b\in \Re^n$, $c\in\Re^n$, $u=\{u_t\in \Re^m:t=0,1,\ldots,T-1\}$, $v=\{v_t\in \Re^n:t=0,1,\ldots,T-1\}$, $w=\{w_t\in \Re^m:t=0,1,\ldots,T-1\}$.

\medskip

\begin{remark}
	Note that instead of formulating the ensemble as finite $N$ particles---as is common in standard Ensemble filter literature---we adopt a mean-field perspective. In this framework, the variables $\tilde{X}_0$, $\tilde{\xi}_t$ and $\tilde{\zeta}_t$ are treated as random variables in their respective spaces ($\Re^n$ or $\Re^m$) representing a generic single ensemble member. Consequently, the ensemble statistics such as the mean and variance are computed by the marginals of the underlying probability measures rather than empirical estimates.
\end{remark}

\medskip
As in the classic duality framework~\eqref{eq:opt-classic}, the optimality of the estimator is defined in the minimum variance sense. To determine the ``optimal'' parameters for the ensemble estimator~\eqref{eq:estimator-aug}, we formulate the following bi-objective optimization problem:
\begin{itemize}
	\item[{\bf (o1)}] The ensemble mean, $\E[\tilde{S}_T|Z_{0:T-1}]$, is the minimum variance estimator of $a^\tp X_T$.
	\item[{\bf (o2)}] The ensemble second moment, $\E\big[\tilde{S}_T^2|Z_{0:T-1}\big]$, is the minimum variance estimator of $(a^\tp X_T)^2$.
\end{itemize}
In the following section, the optimal solution that solves both objective is presented.

\section{The Optimal Solution}\label{sec:opt}

The first objective boils down to the classic duality, and the optimal choice of the parameters $b$ and $\{u_t\}$ are obtained by the following proposition:

\medskip

\begin{proposition}\label{prop:b-u}
	Consider a state transition matrix defined by
\begin{equation*}
	\Phi_{T,t} := \prod_{s=t}^{T-1}\big(I - H^\tp K_s^\tp\big)A^\tp
\end{equation*}
Then the choice of parameters
\begin{subequations}
	\begin{align*}
		b &= \Phi_{T,0}a\\
		u_t&= -K_t^\tp A^\tp \Phi_{T,t+1}a
	\end{align*}
\end{subequations}
makes the estimator~\eqref{eq:estimator-aug} satisfy (o1) regardless of the choice of $c$, $v$ and $w$.
\end{proposition}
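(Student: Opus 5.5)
First reduce (o1) to the classical problem. The extra sources $\tilde X_0$, $\tilde\xi_t$, $\tilde\zeta_t$ are independent of $(X_0,\xi,\zeta)$, hence of $Z_{0:T-1}$, and have zero mean ($\tilde X_0-m_0$ is centered). Therefore
\[
\E[\tilde S_T\mid Z_{0:T-1}] = b^\tp m_0 - \sum_{t=0}^{T-1}u_t^\tp Z_t = S_T,
\]
which is exactly the classical estimator~\eqref{eq:estimator}. Its value does not depend on $c$, $v$, $w$. So (o1) holds for the stated $b$ and $u$ exactly when these parameters solve the classical problem~\eqref{eq:opt-classic}. Here I read ``minimum variance estimator'' as minimum variance within the class of linear estimators of the form~\eqref{eq:estimator}. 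In the Gaussian setting this class contains $\E[a^\tp X_T\mid Z_{0:T-1}]$ anyway.

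Next I solve~\eqref{eq:opt-classic} by the duality argument of Appendix~\ref{apdx:classic}. I introduce the backward costate
\[
y_T=a,\qquad y_t = A^\tp y_{t+1} + H^\tp u_t ,
\]
and telescope $y_{t+1}^\tp X_{t+1}-y_t^\tp X_t$ using~\eqref{eq:sys}. This gives
\[
a^\tp X_T - S_T = y_0^\tp X_0 - b^\tp m_0 + \sum_t\big(y_{t+1}^\tp\xi_t + u_t^\tp\zeta_t\big).
\]
By independence, the mean-squared error equals $(y_0-b)^\tp m_0$ squared, plus $y_0^\tp\Sigma_0 y_0$, plus $\sum_t\big(y_{t+1}^\tp Q y_{t+1}+u_t^\tp R u_t\big)$. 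The first term forces $b=y_0$. What remains is a deterministic LQ problem in backward time, with dynamics $y_t=A^\tp y_{t+1}+H^\tp u_t$ and terminal cost weight $\Sigma_0$ at $t=0$.

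I then solve this LQ problem by dynamic programming in forward index $t$. The claim is that the value function has the form $y_t^\tp\Sigma_t y_t$, and I prove it by induction. At each step I minimize
\[
(A^\tp y_{t+1}+H^\tp u)^\tp\Sigma_t(A^\tp y_{t+1}+H^\tp u) + u^\tp R u
\]
over $u$. Completing the square gives $u_t=-(H\Sigma_tH^\tp+R)^{-1}H\Sigma_tA^\tp y_{t+1} = -K_t^\tp A^\tp y_{t+1}$. Substituting back, the residual quadratic in $y_{t+1}$ is exactly~\eqref{eq:KF-var}, which closes the induction.

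Under this feedback the closed loop is $y_t=(I-H^\tp K_t^\tp)A^\tp y_{t+1}$. Unrolling from $y_T=a$ gives $y_t=\Phi_{T,t}a$, hence $b=\Phi_{T,0}a$ and $u_t=-K_t^\tp A^\tp\Phi_{T,t+1}a$. The product in $\Phi_{T,t}$ must be read with the factor for $s=t$ on the left, consistent with this backward recursion. Finally, since the Riccati recursion produces strictly convex stage costs, this feedback is the unique minimizer.

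The main obstacle I expect is the completing-the-square step and matching it with~\eqref{eq:KF-var}. This needs $H\Sigma_tH^\tp+R\succ0$, which holds if $R\succ0$. One also has to check carefully that the minimized quadratic reproduces $A\Sigma_tA^\tp+Q-A\Sigma_tH^\tp(\cdot)^{-1}H\Sigma_tA^\tp$, with the $Q$ term coming from $y_{t+1}^\tp Qy_{t+1}$. The remaining steps are bookkeeping.
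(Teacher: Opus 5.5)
Your proposal is correct and follows the same route as the paper. The extra terms vanish under the conditional (ensemble) mean, which reduces (o1) to the classical problem. That problem is then solved through the backward dual process $y_t = A^\tp y_{t+1} + H^\tp u_t$, telescoping, and the resulting deterministic LQ problem, whose Riccati recursion is \eqref{eq:KF-var} and whose closed loop yields $\Phi_{T,t}$. Your dynamic-programming induction with value function $y_t^\tp \Sigma_t y_t$, together with the completing-the-square step, supplies the optimal-control derivation that Appendix~\ref{apdx:classic} only states.
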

\begin{proof}
	$c$, $v_t$ and $w_t$ all vanish upon taking ensemble mean. Henceforth the estimator~\eqref{eq:estimator-aug} becomes identical to~\eqref{eq:estimator}, and the problem boils down to the classic duality. The derivation of the optimal control process appears in the Appendix~\ref{apdx:classic}.
\end{proof}

\medskip
The following theorem describes the optimal choice for the remaining parameters that simulatneously satisfies the optimization objectives (o1) and (o2).

\medskip

\begin{theorem}\label{prop:c-v-w}
	 Let $\gamma_1, \gamma_2 \in [0,1]$ be two hyperparameters to be chosen. Suppose $\Sigma_t\succ 0$ for all $t=0,1,\ldots,T-1$. Then there exists a matrix $C_t\in \Re^{n\times n}$ that satisfies
	\begin{equation}\label{eq:C-t}
		\begin{aligned}
			C_t^\tp \Sigma_t C_t &= \big(A-\frac{1+\gamma_2^2}{2}AK_tH\big)\Sigma_t\big(A-\frac{1+\gamma_2^2}{2}AK_tH\big)^\tp \\
			&\quad -\Big(\frac{1-\gamma_2}{2}\Big)^2AK_tH\Sigma_tH^\tp K_t^\tp A^\tp + (1-\gamma_1^2)Q
		\end{aligned}
	\end{equation}
	Then we choose
	\begin{subequations}
		\begin{align*}
			b &= \Phi_{T,0}a\\
			u_t&= -K_t^\tp A^\tp \Phi_{T,t+1}a\\
			c &= \Psi_{T,0} a\\
			v_t &= \gamma_1 \Psi_{T,t+1} a\\
			w_t &= \gamma_2 K_t^\tp A^\tp \Psi_{T,t+1} a
		\end{align*}
	\end{subequations}
	where $\Phi_{T,t}$ is defined as in Prop.~\ref{prop:b-u}, and 
	\begin{equation*}
		\Psi_{T,t} := \prod_{s=t}^{T-1}C_t
	\end{equation*}
	Then the augmented estimator
	\begin{align*}
		\tilde{S}_T^* &= a^\tp \Big(\Phi_{T,0}^\tp m_0 + \sum_{t=0}^{T-1}\Phi_{T,t+1}^\tp AK_t Z_t + \Psi_{T,0}^\tp (\tilde{X}_0-m_0) \\
		&\quad +  \gamma_1\sum_{t=0}^{T-1} \Psi_{T,t+1}^\tp \tilde{\xi}_t + \gamma_2\sum_{t=0}^{T-1}\Psi_{T,t+1}^\tp A K_t \tilde{\zeta}_t\Big)
	\end{align*}
	satisfies (o1) and (o2) simultaneously.
\end{theorem}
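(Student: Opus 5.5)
The plan is to reduce (o2) to matching a conditional variance and then verify that matching by a telescoping argument along the backward recursion generated by $C_t$.

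\textbf{Reduction to a variance identity.} By Prop.~\ref{prop:b-u}, the stated $b,u$ already give (o1) for any $c,v,w$. So $\E[\tilde S_T^*|Z_{0:T-1}]=\E[a^\tp X_T|Z_{0:T-1}]=a^\tp m_T$. Since $X_T|Z_{0:T-1}\sim N(m_T,\Sigma_T)$, we have $\E[(a^\tp X_T)^2|Z_{0:T-1}]=(a^\tp m_T)^2+a^\tp\Sigma_T a$. Hence (o2) is equivalent to
\begin{equation*}
\mathrm{Var}(\tilde S_T^*|Z_{0:T-1})=a^\tp\Sigma_T a .
\end{equation*}
The copies $\tilde X_0,\tilde\xi_t,\tilde\zeta_t$ are independent of each other and of $Z_{0:T-1}$. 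The conditional variance is therefore
\begin{equation*}
c^\tp\Sigma_0c+\sum_t v_t^\tp Q v_t+\sum_t w_t^\tp R w_t .
\end{equation*}
Set $y_t:=\Psi_{T,t}a$, with the product ordered so that $y_t=C_ty_{t+1}$ and $y_T=a$. The target identity then reads
\begin{equation*}
y_0^\tp\Sigma_0y_0+\sum_{t=0}^{T-1}y_{t+1}^\tp\big(\gamma_1^2Q+\gamma_2^2AK_tRK_t^\tp A^\tp\big)y_{t+1}=y_T^\tp\Sigma_Ty_T .
\end{equation*}

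\textbf{Telescoping.} The identity follows by summing over $t$ once we have the one-step matrix identity
\begin{equation*}
C_t^\tp\Sigma_tC_t+\gamma_1^2Q+\gamma_2^2AK_tRK_t^\tp A^\tp=\Sigma_{t+1}.
\end{equation*}
Indeed, the step-$t$ term is $y_{t+1}^\tp(\Sigma_{t+1}-C_t^\tp\Sigma_tC_t)y_{t+1}=y_{t+1}^\tp\Sigma_{t+1}y_{t+1}-y_t^\tp\Sigma_ty_t$.

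To check the one-step identity I will substitute \eqref{eq:C-t} and use the Riccati equation \eqref{eq:KF-var}. Introduce the shorthands $M=A\Sigma_tA^\tp$, $P=AK_tH\Sigma_tA^\tp$ (symmetric) and $N=AK_tH\Sigma_tH^\tp K_t^\tp A^\tp$. The key algebraic fact is $K_t(H\Sigma_tH^\tp+R)K_t^\tp=\Sigma_tH^\tp K_t^\tp$, which gives $AK_tRK_t^\tp A^\tp=P-N$. Then $\Sigma_{t+1}=M+Q-P$, and matching the $P$ and $N$ coefficients reduces everything to scalar identities in $\gamma_2$.

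\textbf{Main obstacle.} This coefficient bookkeeping is where I expect trouble. With $\alpha=\frac{1+\gamma_2^2}{2}$ and $\beta=(\frac{1-\gamma_2}{2})^2$, the $P$-terms cancel correctly. The $N$-coefficient, however, requires $\alpha^2-\beta=\gamma_2^2$. This holds at $\gamma_2\in\{0,1\}$, the EnKF/EnSRF endpoints, but apparently not for general $\gamma_2$. I would therefore first re-derive which $(\alpha,\beta)$ make the identity exact. The relevant conditions are $2\alpha-\gamma_2^2=1$ together with $\alpha^2-\beta=\gamma_2^2$. If the normalization in \eqref{eq:C-t} is off, I would state the proof for the corrected constants. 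In any case, the robust route is to \emph{define} $C_t$ through the one-step identity, i.e.\ require $C_t^\tp\Sigma_tC_t=\Sigma_{t+1}-\gamma_1^2Q-\gamma_2^2AK_tRK_t^\tp A^\tp$, and verify afterwards that this agrees with \eqref{eq:C-t}.

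\textbf{Existence of $C_t$.} Since $\Sigma_t\succ0$, take $C_t=\Sigma_t^{-1/2}D_t^{1/2}$, where $D_t$ is the required right-hand side; this works provided $D_t\succeq0$. By the Joseph form,
\begin{equation*}
(I-K_tH)\Sigma_t=(I-K_tH)\Sigma_t(I-K_tH)^\tp+K_tRK_t^\tp\succeq K_tRK_t^\tp ,
\end{equation*}
so $\Sigma_{t+1}=A(I-K_tH)\Sigma_tA^\tp+Q\succeq AK_tRK_t^\tp A^\tp+Q$. Hence $D_t\succeq(1-\gamma_2^2)AK_tRK_t^\tp A^\tp+(1-\gamma_1^2)Q\succeq0$ for $\gamma_1,\gamma_2\in[0,1]$. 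Combining this with Prop.~\ref{prop:b-u} and the telescoping step yields (o1) and (o2) simultaneously.
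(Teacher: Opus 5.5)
Your proposal is correct and is essentially the paper's argument. You reduce (o2) to the variance match $c^\tp\Sigma_0c+\sum_t v_t^\tp Qv_t+\sum_t w_t^\tp Rw_t=a^\tp\Sigma_Ta$, telescope along $\eta_t=C_t\eta_{t+1}$ via the one-step identity $C_t^\tp\Sigma_tC_t=\Sigma_{t+1}-\gamma_1^2Q-\gamma_2^2AK_tRK_t^\tp A^\tp$, and get existence of $C_t$ from the Joseph form. Your reduction through the Gaussian conditional second moment is a slightly cleaner version of the paper's variance-plus-bias decomposition.

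Your suspicion about the constant is also right. Your two conditions force $\beta=\alpha^2-\gamma_2^2=\big(\frac{1-\gamma_2^2}{2}\big)^2$, so the $\big(\frac{1-\gamma_2}{2}\big)^2$ in \eqref{eq:C-t} is a typo. As literally written, for $0<\gamma_2<1$ the ensemble variance exceeds $a^\tp\Sigma_Ta$ by $\frac{\gamma_2(1-\gamma_2)^2(2+\gamma_2)}{4}\sum_t\eta_{t+1}^\tp AK_tH\Sigma_tH^\tp K_t^\tp A^\tp\eta_{t+1}$, so no proof can establish the statement as written. With the corrected constant, \eqref{eq:C-t} coincides with the paper's own Joseph-form version \eqref{eq:C-t-alt}, which is exactly the identity you propose to define $C_t$ by.
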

\begin{proof}
See Appendix~\ref{apdx:pf-prop-cvw}.
\end{proof}

\medskip

\begin{remark}
	The $\Phi$ and $\Psi$ are the state transition matrix of the closed loop system of the optimal control problem, and the optimal control $u$, $v$ and $w$ are of the feedback form of the backward-in-time dual processes. In continuous-time setting, the counterpart of $\Psi$ was available in explicit form~\cite{kim2018LFPF}:
	\[
	\frac{\ud\Psi}{\ud t} = \Big(-A^\tp +\frac{1+\gamma_2^2}{2}H^\tp K_t^\tp -\frac{1-\gamma_1^2}{2}\Sigma_t^{-1}Q_t\Big)\Psi
	\]
	However in our work, the $C_t$ matrix is given implicitly via quadratic matrix equation and can only be obtained explicitly under limited conditions. Nevertheless, the non-uniqueness opens up possibility for exploring various algorithmic approaches even under the same hyperparameters $\gamma_1$ and $\gamma_2$, as we outline in the following section.
\end{remark}

\medskip

\section{Connection to the Variety of EnKFs}\label{sec:enkf}

The recursive formulation for the ensemble filter obtained from Thm.~\ref{prop:c-v-w} is presented in the following proposition.

\medskip

\begin{theorem}\label{prop:recursive}
	For any given $a\in\Re^n$ and $T\ge 0$, the optimal estimator $\tilde{S}_T^* = a^\tp \tilde{X}_T$, where $\tilde{X}_T$ is the solution to
	\begin{equation}\label{eq:recursive}
		\tilde{X}_{t+1} = A\big(\tilde{m}_t+K_t(Z_t-H\tilde{m}_t+\gamma_2\tilde{\zeta}_t)\big) + C_t^\tp (\tilde{X}_t-\tilde{m}_t)+\gamma_1\tilde{\xi}_t
	\end{equation}
	with initial distribution $\tilde{X}_0\sim N(m_0,\Sigma_0)$, and $\tilde{m}_t = \E[\tilde{X}_t|Z_{0:t-1}]$ is the ensemble mean from the previous time step.
\end{theorem}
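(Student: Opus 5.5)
The plan is to unroll the recursion~\eqref{eq:recursive} in closed form and match it term by term with the expression for $\tilde{S}_T^*$ in Thm.~\ref{prop:c-v-w}. The argument splits into two parts: first the ensemble mean $\tilde m_t$, then the anomaly $\tilde X_t-\tilde m_t$.

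\textbf{Mean.} Take the conditional expectation $\E[\,\cdot\,|Z_{0:t}]$ of~\eqref{eq:recursive}. The terms $C_t^\tp(\tilde X_t-\tilde m_t)$, $\gamma_1\tilde\xi_t$ and $\gamma_2 AK_t\tilde\zeta_t$ have zero conditional mean. This uses independence of the copies $\tilde X_0,\tilde\xi,\tilde\zeta$ from the observations, and the fact that the anomaly at time $t$ depends only on these copies. Hence $\tilde m_{t+1}=A(\tilde m_t+K_t(Z_t-H\tilde m_t))$ with $\tilde m_0=m_0$, which is exactly the Kalman mean~\eqref{eq:KF-mean}, so $\tilde m_t=m_t$. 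Writing this as $\tilde m_{t+1}=A(I-K_tH)\tilde m_t+AK_tZ_t$ and iterating gives
\[
\tilde m_T=\Phi_{T,0}^\tp m_0+\sum_{t=0}^{T-1}\Phi_{T,t+1}^\tp AK_tZ_t .
\]
Here one needs $\Phi_{T,t}^\tp=\prod A(I-K_sH)$ in the correct (reversed) order, which follows from transposing the defining product in Prop.~\ref{prop:b-u}.

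\textbf{Anomaly.} Subtract the mean equation from~\eqref{eq:recursive}. With $e_t:=\tilde X_t-\tilde m_t$ this gives
\[
e_{t+1}=C_t^\tp e_t+\gamma_1\tilde\xi_t+\gamma_2AK_t\tilde\zeta_t,\qquad e_0=\tilde X_0-m_0 .
\]
Iterating this linear recursion yields
\[
e_T=\Psi_{T,0}^\tp e_0+\sum_{t}\Psi_{T,t+1}^\tp(\gamma_1\tilde\xi_t+\gamma_2AK_t\tilde\zeta_t),
\]
with $\Psi_{T,t}^\tp=C_{T-1}^\tp\cdots C_t^\tp$. This requires reading the product defining $\Psi$ as $\prod_{s=t}^{T-1}C_s$, ordered so that its transpose is the forward composition, exactly as for $\Phi$. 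Adding the two pieces and taking $a^\tp$ reproduces $\tilde S_T^*$ term by term. By Thm.~\ref{prop:c-v-w}, $\tilde S_T^*$ is the optimal estimator, and since $a$ was arbitrary the identity holds as vectors.

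\textbf{Main obstacle.} I expect the main difficulty to be bookkeeping rather than anything conceptual. The index and ordering conventions in the products $\Phi_{T,t}$ and $\Psi_{T,t}$ must be handled consistently, including the typo $C_t$ versus $C_s$ inside the product defining $\Psi$. Care is also needed to condition on $Z_{0:t-1}$ versus $Z_{0:t}$ so that $\tilde m_t$ is well defined as a function of past observations only. I would settle this by an induction on $T$: the claim is trivial at $T=0$, and the inductive step peels off the last factor $A(I-K_{T-1}H)$, respectively $C_{T-1}^\tp$.
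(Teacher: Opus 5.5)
Your proposal is correct and is essentially the paper's argument run in the opposite direction. The paper starts from the closed form of $\tilde S_T^*$ in Thm.~\ref{prop:c-v-w}, peels off the last factors $A(I-K_{T-1}H)$ from $\Phi_{T,t+1}^\tp$ and $C_{T-1}^\tp$ from $\Psi_{T,t+1}^\tp$, and identifies the two brackets with $\tilde m_{T-1}$ and $\tilde X_{T-1}-\tilde m_{T-1}$, which is exactly your inductive step. Your induction also shows that the anomaly depends only on the independent copies $\tilde X_0,\tilde\xi,\tilde\zeta$, so $\tilde m_t$ is well defined and equals the Kalman mean; the paper makes that identification without this justification.
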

\begin{proof}
	It is a bit tedious but straightforward computation. See Appendix~\ref{apdx:pf-prop-rec}.
\end{proof}

\medskip

The recursive formula has three terms. The first term is the stochastic EnKF term, and the second term involves $C_t$ as the gain multiplied to the anomalies. Third term is the copy of the process noise. The process and measurement noises are scaled by the choice of parameters $\gamma_1$ and $\gamma_2$, respectively.

\subsection{Stochastic EnKF}

Let $\gamma_1 = \gamma_2 = 1$ to admit copies of both process noise and measurement noise. Then equation~\eqref{eq:C-t-alt} (an alternative form of~\eqref{eq:C-t}) becomes
\begin{equation*}
	C_t^\tp \Sigma_t C_t = \big(A-AK_tH\big)\Sigma_t\big(A-AK_tH\big)^\tp
\end{equation*}
and therefore the natural choice is $C_t = (I-H^\tp K_t^\tp) A^\tp$. 
The corresponding filter from Thm.~\ref{prop:recursive} is
\begin{equation*}
	\tilde{X}_{t+1} = A\big(\tilde{X}_t+K_t(Z_t-H\tilde{X}_t+\tilde{\zeta}_t)\big) + \tilde{\xi}_t
\end{equation*}
which recovers the stochastic EnKF~\cite{evensen1994sequential}.

\subsection{Eliminating perturbed observation}

The copy of the measurement noise $\tilde{\zeta}_t$ appears in~\eqref{eq:recursive} by the parameter $\gamma_2$. In order to avoid the injection of auxiliary noise to the observation, we choose $\gamma_2 = 0$ while we keep $\gamma_1=1$ the same as before. In this case, equation~\eqref{eq:C-t} becomes
\begin{equation}\label{eq:C-t-no-perturb}
	C_t^\tp \Sigma_t C_t = \big(A-\frac{1}{2}AK_tH\big)\Sigma_t\big(A-\frac{1}{2}AK_tH\big)^\tp - \frac{1}{4}AK_tH\Sigma_tH^\tp K_t^\tp A^\tp
\end{equation}
First, we drop the second term accepting a some margin of error. Then we have $C_t = \big(A-\frac{1}{2}AK_tH\big)^\tp$, which will make~\eqref{eq:recursive}
\begin{equation*}
	\tilde{X}_{t+1} = A\Big(\tilde{X}_t+K_t\big(Z_t-\frac{H\tilde{m}_t+H\tilde{X}_t}{2}\big)\Big) + \tilde{\xi}_t
\end{equation*}
This corresponds to the DEnKF suggested in~\cite{sakov2008deterministic}, where the authors also drop the term quadratic in $K_tH$. The error may be acceptable in the case when the observation error $R$ is relatively large, and thus the correction term becomes small.

Meanwhile, the original formulation of EnSRF~\cite{whitaker2002ensemble} further assumes $R$ is diagonal, and examine the observation individually. To connect this idea to our discussion, set $m=1$. Then $H\Sigma_t H$ and $R$ become scalar, and one can solve for $\alpha$ such that
an anzats $C_t = (A-\alpha AK_tH)^\tp$ satisfies~\eqref{eq:C-t-no-perturb}. It is straightforward that $\alpha$ is the solution to the quadratic equation
\begin{equation*}
	\frac{H\Sigma_t H^\tp}{H\Sigma_t H^\tp + R}\alpha^2 - 2\alpha + 1 = 0
\end{equation*}
which is exactly the reduction rate in~\cite{whitaker2002ensemble}.

In general, $C_t$ is formally obtained by 
\begin{equation*}
	C_t = \Sigma_t^{-1/2}L\Gamma_t^{1/2}
\end{equation*}
where $L$ is an arbitrary unitary matrix such that $LL^\tp = I$ and $\Gamma_t$ is the right-hand side of~\eqref{eq:C-t}. The goal of EAKF is to find $C_t$ directly using ensembles.  However, this may involve computing inverse and square root of $n\times n$ matrices. An approach to compute the inverse square root $\Sigma_t^{-1/2}$ is to use singular value decomposition on the ensemble anomalies: $\Sigma_t \approx \frac{1}{N-1} \delta_t\delta_t^\tp$ where $\delta_t$ is the $n\times N$ anomaly matrix, $N$ is the number of particles in the ensemble. By performing the singular value decomposition on the anomaly matrix $\delta_t=USV^\tp$, one can obtain the inverse square root by $\Sigma_t^{-1/2}\approx \sqrt{N-1}US^{-1}V^\tp$.  

Lastly, ETKF reformulate the matrix square-root problem into finding $N\times N$ matrix $W_t$ such that $\frac{1}{N-1}\delta_tW_tW_t^\tp\delta_t^\tp \approx \Gamma_t$, instead of finding $n\times n$ matrix $C_t$. The algorithm also changes such that $W_t$ is multiplied to the anomaly term from the right.

\subsection{Discussions on general cases}\label{sec:conclusion}

The freedom to choose the parameters $\gamma_1$ and $\gamma_2$ may reveal a hybrid configuration. For instance, one can choose small but non-zero $\gamma_2$ to benefit from the robustness of the stochastic formulation. Computing $C_t$ in this case can be done similar to DEnKF or EnSRF. A completely deterministic setting, $\gamma_1=\gamma_2=0$, yields the right-hand side of~\eqref{eq:C-t} be the discrete time Riccati equation. While we cannot find explicit formula for $C_t$ from the quadratic matrix equation, this way we can move covariance inflation from the process model into interaction term between particles which is proportional to the anomaly.

\section{Conclusion and Future outlook}

In this work, different variants of ensemble Kalman filter, both with or without perturbed observation, are revisited. Through the lens of duality, it is demonstrated that operational choices between seemingly distinct algorithms are, in fact, a choice of hyperparameters.

The non-uniqueness of the optimal solution leads to various deterministic ensemble algorithm even under the same choice of hyperparameters. This original formulation also opens up new possibilities for designing hybrid algorithms by tuning these hyperparameters $\gamma_1$ and $\gamma_2$. 
Previous studies have shown deterministic filters (where $\gamma$'s are nearly $0$) have smaller sampling error, but often fails in strongly nonlinear regimes. Conversely, stochastic noise injection performs better preserving nonlinearity and non-Gaussianity. Future theoretical efforts should explore how to navigate the $\gamma$ homotopy to manage simulation variance, finite $N$ effects or structural stability when applied to extreme systems, potentially establishing an algorithm to optimally choose $\gamma$'s.

Extending this framework to nonlinear, non-Gaussian settings will be the most crucial next step.  In the linear-Gaussian settings, matching the first two moments ((o1) and (o2)) is sufficient to establish an exact filter. However, in nonlinear settings, and exact duality-based filter would theoretically require matching all higher-order moments. Furthermore, achieving such exactness cannot rely on the deterministic weight processes utilized in this paper. Instead, a stochastic backward dual processes shall be introduced (as explored in continuous-time settings in\cite{kim2019duality,duality_jrnl_paper_II}). 

In theory, the EnKF variants showcased in this paper may become suboptimal when applied to nonlinear, non-Gaussian posteriors. However, the practical success of the EnKF family is largely due to their robust nonlinear extensions. In practice, the EnKF remains effective by simply propagating the ensemble members through the full nonlinear model update, yielding robust estimates nonetheless. Within the proposed framwork, replacing the linear observation matrix $H$ with a nonlinear function is straightforward. However, constructing adjoint for a nonlinear state model remains challenging and needs further investigation.



\section{Acknowledgments}
This work is supported by the Hongik University New Faculty Research Funds. The author has used AI assistant, specifically Gemini and DeepL, to translate portions of the text from the author's native language and to refine the English expressions in this manuscript. 

\bibliographystyle{IEEEtran}
\bibliography{_master_bib_jin,jin_papers}

\appendix

\section{Appendix}

\subsection{Derivation of Kalman filter using duality}\label{apdx:classic}

We loosely follow the formulation in~\cite[Ch.~7.5]{astrom1970}.
Consider the backward-in-time control process
\begin{equation}\label{eq:y-classic}
	y_t = A^\tp y_{t+1}+H^\tp u_t,\quad y_T = a
\end{equation}
for an arbitrary vector $a\in\Re^n$. Then
\begin{align*}
	a^\tp X_T &= y_T^\tp X_T\\
	&= y_0^\tp X_0 + \sum_{t=0}^{T-1} \big(y_{t+1}^\tp X_{t+1} - y_t^\tp X_t\big)\\
	&= y_0^\tp X_0 + \sum_{t=0}^{T-1} \big(y_{t+1}^\tp (AX_t + \xi_t) - (A^\tp y_{t+1}+H^\tp u_t)^\tp X_t\big)\\
	&= y_0^\tp X_0 + \sum_{t=0}^{T-1} \big(y_{t+1}^\tp \xi_t - u_t^\tp (Z_t-\zeta_t)\big)
\end{align*}
We set $b = y_0$ and then the optimization objective becomes
\begin{equation*}
	\E\big[|a^\tp X_T - S_T|^2\big] = y_0^\tp \Sigma_0 y_0 + \sum_{t=0}^{T-1} y_{t+1}^\tp Q y_{t+1} + u_t^\tp R u_t
\end{equation*}
which is a quadratic control objective function.

The corresponding optimal control is given by
\begin{equation*}
	u_t^\opt = -(H\Sigma_t H^\tp + R)^{-1} H\Sigma_t A^\tp y_{t+1} = -K_t^\tp A^\tp y_{t+1}
\end{equation*}
where
\begin{equation*}\label{eq:Ricc-discrete}
	\Sigma_{t+1} = A\Sigma_t A^\tp + Q - A\Sigma_t H^\tp (H\Sigma_t H^\tp + R)^{-1} H\Sigma_t A^\tp
\end{equation*}
Note that this is exactly the Riccati equation~\eqref{eq:KF-var}.

Now plug the optimal control into the dual process~\eqref{eq:y-classic} to define a state transition matrix for the closed loop system:
\begin{equation*}
	\Phi_{T,t} := \prod_{s=t}^{T-1}\big(I - H^\tp K_s^\tp\big)A^\tp
\end{equation*}
so that $y_t = \Phi_{T,t} a$. Now $S_T = a^\tp m_T$ where
\begin{align*}
	m_T &= \Phi_{T,0}^\tp m_0 + \sum_{t=0}^{T-1}\Phi_{T,t+1}^\tp AK_tZ_t\\
	&= A\big(m_{T-1} +K_{T-1}(Z_{T-1}-Hm_{T-1})\big)
\end{align*}
Thus we obtain the mean update equation~\eqref{eq:KF-mean} for the Kalman filter. \qed

\subsection{Proof of Theorem~\ref{prop:c-v-w}}\label{apdx:pf-prop-cvw}

Let
\[
\tilde{m}_T := b^\tp m_0 - \sum_{t=0}^{T-1}u_t^\tp Z_t = \E[\tilde{S}_T|Z_{0:T-1}]
\]
The estimator~\eqref{eq:estimator-aug} is therefore
\[
	\tilde{S}_T = \tilde{m}_T + c^\tp(\tilde{X}_0-m_0) + \sum_{t=0}^{T-1}v_t^\tp\tilde{\xi}_t + \sum_{t=0}^{T-1} w_t^\tp \tilde{\zeta}_t
\]
The optimization objective (o2) is expressed by
\begin{align*}
	\min_{b,u,c,v,w}\E\Big[\big(\E[\tilde{S}_T^2|Z_{0:T-1}]-(a^\tp X_T)^2\big)^2\Big]
\end{align*}
The first term inside the square is
\begin{equation*}
	\E[\tilde{S}_T^2|Z_{0:T-1}] = \tilde{m}_T^2 + c^\tp \Sigma_0 c + \sum_{t=0}^{T-1}v_t^\tp Q v_t + \sum_{t=0}^{T-1}w_t^\tp R w_t
\end{equation*}
Therefore
\begin{align*}
	\E\Big[\big(\E[&\tilde{S}_T^2|Z_{0:T-1}]-(a^\tp X_T)^2\big)^2\Big]\\
	&=\E\Big[\big(\tilde{m}_T^2 + c^\tp \Sigma_0 c + \sum_{t=0}^{T-1}v_t^\tp Q v_t + \sum_{t=0}^{T-1}w_t^\tp R w_t-(a^\tp X_T)^2\big)^2\Big]
\end{align*}
Note that $c^\tp \Sigma_0 c + \sum_{t=0}^{T-1}v_t^\tp Q v_t + \sum_{t=0}^{T-1}w_t^\tp R w_t$ is deterministic. The objective becomes
\begin{align*}
\E\Big[\big(\E[&\tilde{S}_T^2|Z_{0:T-1}]-(a^\tp X_T)^2\big)^2\Big] = \var\Big[\tilde{m}_T^2 - (a^\tp X_T)^2\Big] \\
&+\Big(c^\tp \Sigma_0 c + \sum_{t=0}^{T-1}v_t^\tp Q v_t + \sum_{t=0}^{T-1}w_t^\tp R w_t-\E\big[(a^\tp X_T)^2-\tilde{m}_T^2\big]\Big)^2
\end{align*}
By exploiting orthogonality principle of the first moment estimator, the variance term is also minimized when (o1) is solved (See Appendix A in~\cite{kim2018LFPF}). In this case, the conditional variance $\E\big[(a^\tp X_T)^2-\tilde{m}_T^2\big] = a^\tp \Sigma_T a$ is known. Now can we perform the matching for the second moment. The second term is minimized when
\[
c^\tp \Sigma_0 c + \sum_{t=0}^{T-1}v_t^\tp Qv_t + \sum_{t=0}^{T-1}w_t^\tp R w_t - a^\tp \Sigma_T a = 0
\]
where $\Sigma_T$ is the solution to~\eqref{eq:KF-var}, if achievable.

Indeed, the zero is achievable by considering a backward-in-time process similar to the classical case:
\[
\eta_t = C_t\eta_{t+1},\quad \eta_T = a
\]
such that
\[
\Delta_t:=\eta_{t+1}^\tp \Sigma_{t+1} \eta_{t+1} - \eta_t^\tp\Sigma_t\eta_t = v_t^\tp Q v_t + w_t^\tp R w_t
\]
so that we can pick $c = \eta_0$. Now looking at the Riccati equation~\eqref{eq:KF-var}, we choose a solution to have the form
\[
v_t = \gamma_1 \eta_{t+1},\quad w_t = \gamma_2 K_t^\tp A^\tp \eta_{t+1}
\]
which yields
\begin{align*}
	\Delta_t &=\eta_{t+1}^\tp \Sigma_{t+1} \eta_{t+1} - \eta_t^\tp\Sigma_t\eta_t \\
	&=\eta_{t+1}^\tp\Sigma_{t+1}\eta_{t+1} - \eta_{t+1}^\tp C_t^\tp\Sigma_t C_t\eta_{t+1}\\
	&=\gamma_1^2 \eta_{t+1}^\tp Q \eta_{t+1} + \gamma_2^2\eta_{t+1}^\tp AK_tRK_t^\tp A^\tp \eta_{t+1}
\end{align*}
Since $\eta_{t+1}$ appears in the same from on both sides, we eliminate it and we use~\eqref{eq:kt_def} to simplify the equation to obtain the equation for $C_t$ as Eq.~\eqref{eq:C-t}.

From the Joseph form of the Kalman update for $\Sigma_{t+1}$, Eq.~\eqref{eq:C-t} can also be written as
\begin{equation}\label{eq:C-t-alt}
	\begin{aligned}
		C_t^\tp\Sigma_t C_t &= (A-AK_tH)\Sigma_t(A-AK_tH)^\tp \\
		&\quad + (1-\gamma_1^2)Q + (1-\gamma_2^2)AK_t RK_t^\tp A^\tp
	\end{aligned}
\end{equation}
Since the right-hand side is positive semidefinite for any $\gamma_1,\gamma_2\in[0,1]$, $C_t$ exists.

Finally, $\Psi_{T,t}$ is the state transition matrix for $\eta_t$ so that $\eta_t = \Psi_{T,t}a$.

\medskip

The equation of the estimator is exact because of the Gaussian assumption. \qed

\subsection{Proof of Theorem~\ref{prop:recursive}}\label{apdx:pf-prop-rec}
We start from defining $\tilde{S}_T = a^\tp \tilde{X}_T$ where
\begin{align*}
	\tilde{X}_T &= \Phi_{T,0}^\tp m_0 + \sum_{t=0}^{T-1}\Phi_{T,t+1}^\tp AK_t Z_t + \Psi_{T,0}^\tp (\tilde{X}_0-m_0) \\
	&\quad +  \gamma_1\sum_{t=0}^{T-1} \Psi_{T,t+1}^\tp \tilde{\xi}_t + \gamma_2\sum_{t=0}^{T-1}\Psi_{T,t+1}A K_t \tilde{\zeta}_t\\
	&= AK_{T-1}Z_{T-1} + \gamma_1 \tilde{\xi}_{T-1} + \gamma_2 AK_{T-1}\tilde{\zeta}_{T-1} \\
	&\quad + A(I-K_{T-1}H)\Big(\Phi_{T-1,0}^\tp m_0 + \sum_{t=0}^{T-2}\Phi_{T-1,t+1}^\tp AK_t Z_t\Big) \\
	&\quad+ C_{T-1}^\tp\Big(\Psi_{T-1,0}^\tp (\tilde{X}_0-m_0) +  \gamma_1\sum_{t=0}^{T-2} \Psi_{T-1,t+1}^\tp \tilde{\xi}_t \\
	&\qquad\qquad+ \gamma_2\sum_{t=0}^{T-2}\Psi_{T-1,t+1}A K_t \tilde{\zeta}_t\Big) \label{eq:mf-equation}\\
	&= AK_{T-1}Z_{T-1} + \gamma_1 \tilde{\xi}_{T-1} + \gamma_2 AK_{T-1}\tilde{\zeta}_{T-1} \\
	&\quad + A(I-K_{T-1}H)\tilde{m}_{T-1}+ C_{T-1}^\tp\Big(\tilde{X}_{T-1}-\tilde{m}_{T-1}\Big)
\end{align*}
The claim follows by substituting $T-1$ with $t$. \qed
\end{document}